\documentclass[11pt,a4paper]{article}
\pdfoutput=1

\usepackage[colorlinks=true,urlcolor=Blue,citecolor=Green,linkcolor=BrickRed]{hyperref}
\usepackage[usenames,dvipsnames]{xcolor}
\usepackage[utf8]{inputenc}
\usepackage{algorithm}
\usepackage{algorithmicx}
\usepackage[noend]{algpseudocode}
\usepackage{cite}
\usepackage[OT4]{fontenc}
\usepackage{amsthm}
\usepackage{amssymb}
\usepackage{amsmath}
\usepackage{amsfonts}
\usepackage{todonotes}
\usepackage{authblk}
\usepackage{fullpage}

\bibliographystyle{plainurl}

\title{Slowing Down Top Trees for Better Worst-Case Bounds}
\date{}
\author[1]{Bartłomiej Dudek}
\author[1]{Paweł Gawrychowski}
\affil[1]{University of Wrocław, Poland}

\newcommand{\Oh}{{O}}

\newcommand{\eps}{\varepsilon}
\newcommand{\TT}{\mathcal{T}}
\newcommand{\TD}{\mathcal{TD}}
\newcommand{\TF}{\widetilde{T}}

\newtheorem{theorem}{Theorem}[section]
\newtheorem{lemma}[theorem]{Lemma}

\newcommand{\FIGURE}[4]{
\begin{figure}[#1]
\begin{centering}
\includegraphics[scale={#2}]{{#3}.pdf}
\caption{#4}
\label{fig:#3}
\end{centering}
\end{figure}
}

\begin{document}

\maketitle

\begin{abstract}
We consider the top tree compression scheme introduced by Bille et al. [ICALP  2013] and construct an infinite family of trees on $n$ nodes
labeled from an alphabet of size $\sigma$, for which the size of the top DAG is $\Theta(\frac{n}{\log_\sigma n}\log\log_\sigma n)$.
Our construction matches a previously known upper bound and exhibits a weakness of this scheme, as the information-theoretic lower bound
is $\Omega(\frac{n}{\log_\sigma n})$. This settles an open problem stated by Lohrey et~al. [arXiv 2017], who designed a more
involved version achieving the lower bound. We show that this can be also guaranteed by a very minor modification of the
original scheme: informally, one only needs to ensure that different parts of the tree are not compressed too quickly.
Arguably, our version is more uniform, and in particular, the compression procedure is oblivious to the value of $\sigma$.
\end{abstract}
 
\section{Introduction}

Tree compression with top trees introduced by Bille et al.~\cite{TopTrees} is able to take advantage of internal repeats in a tree
while supporting various navigational queries directly on the compressed representation in logarithmic time. At a high level,
the idea is to hierarchically partition the tree into \emph{clusters} containing at most two boundary nodes that are shared between
different clusters. A representation of this hierarchical partition is called the top tree. Then, the top DAG is obtained by identifying
isomorphic subtrees of the top tree. 
Bille et al.~\cite{TopTrees} proved that the size of the top DAG is always $\Oh(n/\log_\sigma^{0.19}n)$ for a tree on $n$ nodes labeled from
an alphabet of size $\sigma$. Furthermore, they showed that top DAG compression is always at most logarithmically worse
than the classical DAG compression (and Bille et al.~\cite{BilleFG17} constructed a family of trees for which this logarithmic upper
bound is tight). 
Later, H{\"u}bschle-Schneider and Raman~\cite{Hubschle-Schneider15} improved the bound on the size of the top DAG
to $\Oh(\frac{n}{\log_\sigma n}\log\log_\sigma n)$ using a more involved reasoning based on the heavy path decomposition.
This should be compared with the information-theoretic lower bound of $\Omega(\frac{n}{\log_\sigma n})$.

A natural question is to close the gap between the information-theoretic lower bound of $\Omega(\frac{n}{\log_\sigma n})$ and
the upper bound of $\Oh(\frac{n}{\log_\sigma n}\log\log_\sigma n)$. We show that the latter is tight for the top tree construction
algorithm of Bille et al.~\cite{TopTrees}.

\begin{theorem}\label{thm:lower_bound}
There exists an infinite family of trees on $n$ nodes labeled from an alphabet of size $\sigma$ for which size of the
top DAG is $\Omega(\frac{n}{\log_\sigma n}\log\log_\sigma n)$.
\end{theorem}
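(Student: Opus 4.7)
The plan is to construct, for every sufficiently large $n$ and each alphabet size $\sigma$, an explicit tree on $\Theta(n)$ nodes whose top DAG under the Bille et al.\ algorithm contains $\Omega(\frac{n}{\log_\sigma n}\log\log_\sigma n)$ pairwise non-isomorphic clusters. The strategy is to exploit the greedy nature of the horizontal and vertical merge rules: the algorithm proceeds in $\Theta(\log n)$ rounds, and after $i$ rounds each cluster represents a subtree of size $\Theta(2^i)$ whose precise shape depends on the geometry of its surroundings at every earlier round. A block of $\log_\sigma n$ nodes is processed during $\Theta(\log\log_\sigma n)$ rounds, so the construction will arrange that each such round, applied to each such block, produces a cluster not isomorphic to any other cluster in the top tree.

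The natural building blocks are paths, since on paths the horizontal merge can be tracked in closed form: after $i$ rounds, a path of length $\ell$ decomposes into clusters of length roughly $2^i$, plus a remainder determined by $\ell \bmod 2^{i+1}$. Choosing $\Theta(\log_\sigma n)$ different path lengths in the range $[\log_\sigma n, 2\log_\sigma n]$ forces the intermediate clusters of different paths to differ at every round. Labelling each path type with its own pattern of letters from the alphabet rules out accidental matches between different types (and costs only $O(\log_\sigma \log_\sigma n)$ extra symbols per path, which is negligible). To reach $\Theta(n)$ total nodes I would attach $\Theta(n/(\log_\sigma n)^2)$ copies of each type to a common backbone, whose labelling encodes the position of each copy so that clusters arising from distinct copies are also distinguishable.

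The proof then reduces to an inductive counting argument: at round $i$, the cluster produced from the first $2^i$ edges of the $j$-th copy of path type $\ell$ is determined by the triple $(\ell, j, i)$ and is distinct from every other cluster in the top tree. Summing yields $\Omega(\log_\sigma n \cdot \log\log_\sigma n \cdot n/(\log_\sigma n)^2) = \Omega(\frac{n}{\log_\sigma n}\log\log_\sigma n)$ distinct clusters, matching the claim. The main obstacle I foresee is this inductive step itself: the Bille et al.\ merge rules allow some freedom in the order in which adjacent clusters are combined, so one has to spell the rules out explicitly and track both the shape and the boundary labels of every intermediate cluster in order to rule out unexpected isomorphisms between clusters originating from different $(\ell, j)$. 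A secondary subtlety is handling the interaction between the paths and the backbone at the top few rounds, but since only $\Theta(\log \log_\sigma n)$ rounds per block matter for the bound, losses at the highest rounds are absorbed into the constant.
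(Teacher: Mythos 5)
Your central counting step does not hold, and fixing it requires the idea that is actually the heart of the paper's construction, so this is a genuine gap rather than a cosmetic one.

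The inductive claim ``at round $i$, the cluster produced from the first $2^i$ edges of the $j$-th copy of path type $\ell$ is distinct from every other cluster'' fails for all but the last $O(1)$ rounds. A cluster produced after $i$ rounds has $O(2^i)$ edges, hence its top-tree representation is a labeled binary tree on $O(2^i)$ nodes, and there are only $(O(\sigma^2))^{O(2^i)}$ such objects. You want $\Theta(n/\log_\sigma n)$ distinct clusters \emph{per round}; but for any round with $2^i \le c\log_\sigma n$ and $c$ small enough, $(O(\sigma^2))^{O(2^i)} = n^{o(1)} \ll n/\log_\sigma n$, so the vast majority of the triples $(\ell,j,i)$ necessarily collide. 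Already at $i=1$ you ask for $\Theta(n/\log_\sigma n)$ distinct two-edge clusters, while only $O(\sigma^3)$ exist. This is not a bookkeeping issue about merge orders or boundary labels that careful case analysis will rescue; it is exactly the information-theoretic bound that the paper uses to prove its matching \emph{upper} bound (Theorem~\ref{thm:new_alg}). Consequently your construction yields only $O(n/\log_\sigma n)$ distinct clusters, not $\Omega(\frac{n}{\log_\sigma n}\log\log_\sigma n)$: the cheap low rounds cannot help, and each path contributes only the final $O(1)$ rounds where clusters are already of size $\Theta(\log_\sigma n)$.

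The missing idea is that all $\Theta(\log\log_\sigma n)$ distinct clusters attributed to a single ``block'' must be \emph{large} (size $\Omega(\log_\sigma n)$), and one must engineer $\Theta(\log\log_\sigma n)$ \emph{extra} rounds in which a unique large cluster keeps merging with something before it reaches the common root. The paper does this by attaching to each uniquely-labeled path $P_k$ of length $t=\log_\sigma n$ a collection of $2^k-1 = \Theta(t^{1/3})$ full ternary trees $S_k$ of size $O(t^{0.53})$, all hanging off the same gadget root. The ternary trees are negligibly small (total $o(t)$ nodes per gadget), yet they need the same $\Theta(k)=\Theta(\log\log_\sigma n)$ iterations as the path to collapse to single clusters. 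After that point the gadget root has $2^k$ children, and for $k$ further iterations these are merged in pairs; in each of those iterations the cluster containing $C_P^{(i)}$ grows and remains unique because it contains the uniquely labeled path. This padding by small-but-slow shapes is what buys the extra $\log\log_\sigma n$ factor, and it is precisely the mechanism absent from your proposal. Your remark that interactions near the top of the tree are a ``secondary subtlety'' absorbed into constants is the opposite of the truth: the post-compression rounds at the top of each gadget are where all of the excess distinct clusters come from.
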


This answers an open question explicitly mentioned by Lohrey et al.~\cite{LohreyRS17}, who developed a different algorithm for
constructing a top tree which guarantees that the size of the top DAG matches the information-theoretic lower bound.
A crucial ingredient of their algorithm is a partition of the tree $T$ into $\Oh(n/k)$ clusters of size at most $k$, where $k=\Theta(\log_{\sigma}n)$.
As a byproduct, they obtain a top tree of depth $\Oh(\log n)$ for each cluster. Then, a top tree of the tree $T'$ obtained
by collapsing every cluster of $T$ is constructed by applying the algorithm of Bille et al.~\cite{TopTrees}. Finally, all top trees
are patched together to obtain a top tree of $T$. While this method guarantees that the number of distinct clusters is
$\Omega(\frac{n}{\log_\sigma n})$, its disadvantage is that the resulting procedure is non-uniform, and in particular
needs to be aware of the value of $\sigma$ and $n$.

We show that a slight modification of the algorithm of Bille et al.~\cite{TopTrees} is, in fact, enough to guarantee that the number
of distinct clusters, and so also the size of the top DAG, matches the information-theoretic lower bound. The key insight
actually comes from the proof of Theorem~\ref{thm:lower_bound}, where we construct a tree with the property that some of
its parts are compressed much faster than the others, resulting in a larger number of different clusters. The original algorithm
proceeds in iterations, and in every iteration tries to merge adjacent clusters as long as they meet some additional conditions.
Surprisingly, it turns out that the information-theoretic lower bound can be achieved by slowing down this process to avoid
some parts of the tree being compressed much faster than the others. Informally, we show that it is enough to require
that in the $t^\text{th}$ iteration adjacent clusters are merged only if their size is at most $\alpha^{t}$, for some constant $\alpha>1$.
The modified algorithm preserves nice properties of the original method such as the $\Oh(\log n)$ depth of the obtained top tree.

A detailed description of the original algorithm of Bille et al.~\cite{TopTrees} can be found in Section~\ref{se:preliminaries}.
In Section~\ref{se:lower_bound} we prove Theorem~\ref{thm:lower_bound} and in Section~\ref{se:opt} describe the modification.

\section{Preliminaries}\label{se:preliminaries}

In this section, we briefly restate the top tree construction algorithm of Bille et al.~\cite{TopTrees}. The naming convention
is mostly preserved.

Let $T$ be a (rooted) tree on $n$ nodes. The children of every node are ordered from left to right, and every node has a label from
an alphabet $\Sigma$. $T(v)$ denotes the subtree of $v$, including $v$ itself, and $F(v)$ is the forest of subtrees of all
children $v_{1},v_{2},\ldots,v_{k}$ of $v$, that is, $F(v)=T(v_1)\cup T(v_2) \cup \ldots \cup T(v_k)$.
For $1\le s\le r \le k$ we define $T(v,v_s,v_r)$ to be the tree consisting of $v$ and a contiguous range of its children starting
from the $s^\text{th}$ and ending at the $r^\text{th}$, that is, $T(v,v_s,v_r)=\{v\}\cup T(v_s)\cup T(v_{s+1}) \cup \ldots \cup T(v_r)$.

We define two types of \emph{clusters}.
A cluster with only a top boundary node $v$ is of the form $T(v,v_s,v_r)$.
A cluster with a top boundary node $v$ and a bottom boundary node $u$ is of the form $T(v,v_s,v_r)\setminus F(u)$ for a
node $u\in T(v,v_s,v_r)\setminus\{v\}$.

If edge-disjoint clusters $A$ and $B$ have exactly one common boundary node and $C=A\cup B$ is a cluster, then $A$ and $B$ can be \textit{merged}
into $C$. Then one of the top boundary nodes of $A$ and $B$ becomes the top boundary node of $C$ and there are various ways of choosing the bottom boundary node of $C$. See Figure 2 in \cite{TopTrees} for the details of all five possible ways of merging two clusters.

A \textit{top tree} $\TT$ of $T$ is an ordered and labeled binary tree describing a hierarchical decomposition of $T$ into clusters.
\begin{itemize}
 \item The nodes of $\TT$ correspond to the clusters of $\TT$.
 \item The root of $\TT$ corresponds to the whole $T$.
 \item The leaves of $\TT$ correspond to the edges of $T$. The label of each leaf is the pair of labels of the endpoints of its corresponding edge $(u,v)$ in $T$. The two labels are ordered so that the label of the parent appears before the label of the child.
 \item Each internal node of $\TT$ corresponds to the merged cluster of its two children. The label of each internal node is the type of merge it represents (out of the five merging options).
 The children are ordered so that the left child is the child cluster visited first in a preorder traversal of $T$.
\end{itemize}

The top tree $\TT$ is constructed bottom-up in iterations, starting with the edges of $T$ as the leaves of $\TT$.
During the whole process, we maintain an auxiliary ordered tree $\TF$, initially set to $T$.
The edges of $\TF$ correspond to the nodes of $\TT$, which in turn correspond to the clusters of $T$.
The internal nodes of $\TF$ correspond to the boundary nodes of these clusters and the leaves of $\TF$ correspond to a subset of the leaves of $T$.

On a high level, the iterations are designed in such a way that every time a constant fraction of edges of $\TF$ are merged.
This is proved in Lemma 1 of~\cite{TopTrees}, and we describe a slightly more general property in Lemma~\ref{le:shrinking}.
This guarantees that the height of the resulting top tree is $\Oh(\log n)$.
Each iteration consists of two steps:

\paragraph{Horizontal merges.} 
For each node $v\in\TF$ with $k\ge 2$ children $v_1,\ldots,v_k$, for $i=1$ to $\lfloor \frac k2 \rfloor$, merge the edges $(v,v_{2i-1})$ and $(v,v_{2i})$ if $v_{2i-1}$ or $v_{2i}$ is a leaf.
If $k$ is odd and $v_k$ is a leaf and both $v_{k-2}$ and $v_{k-1}$ are non-leaves then also merge $(v,v_{k-1})$ and $(v,v_k)$.

\paragraph{Vertical merges.}
For each maximal path $v_1,\ldots,v_p$ of nodes in $\TF$ such that $v_{i+1}$ is the parent of $v_i$ and $v_2,\ldots,v_{p-1}$ have a single child: If $p$ is even merge the following pairs of edges $\{(v_1,v_2),(v_2,v_3)\},\ldots,\{(v_{p-2},v_{p-1})\}$.
If $p$ is odd merge the following pairs of edges $\{(v_1,v_2),(v_2,v_3)\},\ldots,\{(v_{p-3},v_{p-2})\}$, and if $(v_{p-1},v_p)$ was not merged in the previous step then also merge $\{(v_{p-2},v_{p-1}),(v_{p-1},v_p)\}$.

\FIGURE{t}{1.}{example}{Result of a single iteration.
Dotted lines denote the merged edges (clusters) and thick edges denote results of merging.
Note that one edge does not participate in the vertical merge due to having been obtained as a result of a horizontal merge.}

See an example of one iteration in Figure~\ref{fig:example}.
Finally, the compressed representation of $T$ is the so-called top DAG $\TD$, which is the minimal DAG representation of $\TT$
obtained by identifying identical subtrees of $\TT$. 
As every iteration shrinks $\TF$ by a constant factor, $\TT$ can be computed in $\Oh(n)$ time, and then $\TD$ can be computed
in $\Oh(|\TT|)$ time~\cite{DowneyST80}. Thus, the entire compression takes $\Oh(n)$ time.

\section{A lower bound for the approach of Bille et al.}\label{se:lower_bound}

In this section, we prove Theorem~\ref{thm:lower_bound} and show that the $\Oh(\frac{n}{\log_\sigma n}\log\log_\sigma n)$ bound from \cite{Hubschle-Schneider15} on the number of distinct clusters created by the algorithm described in Section~\ref{se:preliminaries} is tight.

For every $k\in \mathbb{N}$ we will construct a tree $T_k$ on $n=\sigma^{8^k}$ nodes for which the corresponding top DAG is of size $\Theta(\frac{n}{\log_\sigma n}\log\log_\sigma n)$.
Let $t=8^k=\log_\sigma n$.
In the beginning, we describe a gadget $G_k$ that is the main building block of $T_k$. 
It consists of $\Oh(t)$ nodes: a path of $t$ nodes and $2^k-1=\Oh(t^{\eps'})$ full ternary trees of size $\Oh(t^\eps)$ connected to the root,
where $\eps+\eps'<1$.
See Figure~\ref{fig:gadget}.
The main intuition behind the construction is that full ternary trees are significantly smaller than the path, but they need the same number of iterations to get compressed.	

\FIGURE{t}{1.}{gadget}{Gadget $G_k$ consists of $2^k-1=t^{\eps'}$ trees $S_k$ and one path $P_k$. After $3k$ iterations it is compressed to a tree with $2^k$ nodes connected to the root.}

More precisely, let $P_k$ be the path of length $8^k=t$. Clearly, after $3$ iterations it gets compressed to $P_{k-1}$, and so after $3k$ iterations becomes a single cluster.
Similarly, let $S_k$ be the full ternary tree of height $k$ with $3^k$ leaves, so $\frac{3^k-1}{2}=\Oh(3^k)=\Oh(t^{0.53})$ nodes in total.
Observe that after $3$ iterations $S_k$ becomes $S_{k-1}$, and so after $3k$ iterations becomes a single cluster.
To sum up, the gadget $G_k$ consists of path $P_k$ of $t$ nodes and $2^k-1 = \Oh(t^{1/3})$ trees of size $\Oh(t^{0.53})$, so in total $\Oh(t)$ nodes.
After $3k$ iterations $G_{k}$ consists of $2^{k}-1$ clusters $C_{S}$ corresponding to $S_{k}$ and one cluster $C_{P}$ corresponding to $P_{k}$,
as shown in Figure~\ref{fig:gadget}.
In each of the subsequent $k$ iterations, the remaining clusters are merged in pairs. 

Recall that the top DAG contains a node for every distinct subtree of the top tree, and every node of the top tree corresponds to
a cluster obtained during the compression process. In every gadget $G_{k}$ we assign the labels of the nodes of $P_{k}$ so that
the cluster $C_{P}$ obtained after the first $3k$ iterations corresponds to a distinct subtree of the top tree.
Consequently, so does the cluster obtained from $C_{P}$ in each of the subsequent $k$ iterations.

\FIGURE{b}{1.2}{tree_tk}{$T_k$ consists of $\Theta(n/t)$ gadgets $G_k^{(i)}$, where the $i^\text{th}$ of them contains a unique path $P_k^{(i)}$.}

Finally, the tree $T_k$ consists of $\Theta(n/t)$ gadgets connected to a common root as in Figure~\ref{fig:tree_tk}.
The $i^\text{th}$ gadget $G_k^{(i)}$ is a copy of $G_{k}$ with the labels of $P_{k}^{(i)}$ chosen as to spell out the 
the $i^\text{th}$ (in the lexicographical order) word of length $t$ over $\Sigma$.
Note that $\sigma^t> n/t$, so there are more possible words of length $t$ than the number of gadgets that we want to create.
Then each $C_P^{(i)}$ and the clusters obtained from it during the $k$ iterations corresponds to a distinct subtree of the top tree.
Thus, overall the top DAG contains $\Omega(n/t\cdot k)=\Omega(n/t\cdot \log t) = \Omega(n/\log_\sigma n \cdot \log \log_\sigma n)$
nodes, which concludes the proof of Theorem~\ref{thm:lower_bound}.

\section{An optimal tree compression algorithm}\label{se:opt}

Let $\alpha$ be a constant greater than $1$ and consider the following modification of algorithm \cite{TopTrees}.
As mentioned in the introduction, intuitively we would like to proceed exactly as the original algorithm, except that
in the $t^\text{th}$ iteration we do not perform a merge if one of the participating clusters is of size larger than $\alpha^t$.
However, this would require a slight modification of the original charging argument.
To avoid this, in each iteration we first generate all merges that would have been performed in both steps of the original algorithm.
Then we apply only the merges in which both clusters have size at most $\alpha^t$.

\begin{algorithm}[h]
\begin{algorithmic}[1]
  \For{$t=1,\ldots,\Oh(\log n)$}
    \State simulate one iteration of the original algorithm
    \State apply only merges with both clusters of size at most $\alpha^t$
  \EndFor
  \State construct the top DAG $\TD$ of the obtained tree $\TT$
  \State \Return $\TD$
\end{algorithmic}
\caption{A modified top tree construction algorithm of Bille et al.\cite{TopTrees}.}
\label{alg:opt}
\end{algorithm}

Clearly, the depth of the obtained DAG is $\Oh(\log n)$ as before, because after $\log_\alpha n$ iterations the algorithm is no longer constrained and can behave not worse than the original one.
In the following lemma we show that even if there are some clusters that cannot be merged in one step, the tree still shrinks by roughly a constant factor.

\begin{lemma}\label{le:shrinking}
 Suppose that there are $m=p+q$ clusters in the beginning of the $t^\text{th}$ iteration of Algorithm~\ref{alg:opt}, where $q$ is the number of clusters of size larger than $\alpha^t$.
 Then, after the $t^\text{th}$ iteration there are at most $7/8m+q$ clusters.
\end{lemma}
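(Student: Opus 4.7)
My strategy is to reduce to the original shrinking guarantee of Bille et al.~\cite{TopTrees} by treating the size constraint as merely discarding a bounded number of ``lost'' merges. Concretely, I would invoke Lemma~1 of~\cite{TopTrees}, which guarantees that a single iteration of the unmodified algorithm reduces the cluster count from $m$ to at most $7m/8$; equivalently, the simulation at line~2 of Algorithm~\ref{alg:opt} produces a set $\mathcal{M}$ of at least $m/8$ candidate merges, each combining two clusters into one.

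The heart of the proof is the combinatorial observation that every cluster participates in at most one merge of $\mathcal{M}$. By inspection of the description in Section~\ref{se:preliminaries}, every parent-to-child edge is paired with at most one sibling edge during the horizontal step, every edge on a maximal single-child path is paired with at most one neighbour during the vertical step, and the vertical step explicitly skips edges already consumed horizontally. Hence no edge of $\TF$ appears in more than one element of $\mathcal{M}$.

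With this in hand the rest is bookkeeping. A merge in $\mathcal{M}$ is discarded at line~3 exactly when at least one of its two participants is big, i.e.\ has size greater than $\alpha^t$. Since there are only $q$ big clusters and each big cluster lies in at most one element of $\mathcal{M}$, at most $q$ merges are discarded. Therefore Algorithm~\ref{alg:opt} performs at least $\max(0, m/8 - q)$ merges, leaving at most $7m/8 + q$ clusters; in the degenerate regime $q \ge m/8$ the trivial bound $m \le 7m/8 + q$ already gives the claim.

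The main obstacle is the ``at most one merge per cluster'' fact. It is intuitively clear from the parity-based pairing used by both merge steps, but a careful verification needs a brief case analysis, especially around the odd-$k$ horizontal rule and the clause in the vertical step that avoids reusing edges already produced by a horizontal merge. Everything else is really just arithmetic on top of the quoted Lemma~1 of~\cite{TopTrees}.
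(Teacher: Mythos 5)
Your proposal is correct and follows essentially the same route as the paper: obtain at least $m/8$ candidate merges from the guarantee of Bille et al., observe that the $q$ oversized clusters can disqualify at most $q$ of them (since the candidate merges form disjoint pairs), and conclude that at least $m/8-q$ merges survive, leaving at most $7m/8+q$ clusters. The only cosmetic differences are that you invoke Lemma~1 of~\cite{TopTrees} as a black box where the paper briefly re-derives the $m/8$ count, and you spell out the ``each cluster participates in at most one candidate merge'' fact that the paper uses implicitly.
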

\begin{proof}
The proof is a generalization of the Lemma 1 from \cite{TopTrees}. There are $m+1$ nodes in $\TF$, so at least $m/2+1$ of them
have degree smaller than $2$.
Consider $m/2$ edges from these nodes to their parents and denote this set as $M$.
Then, from a charging argument (see the details in \cite{TopTrees}) we obtain that 
at least half of the edges in $M$ would have been merged in a single iteration of the original algorithm.
Denote these edges by $M'$, where $|M'|\ge m/4$ and observe that at least $|M'|/2 \geq m/8$ pairs of clusters can be merged.

Now, $q$ clusters are too large to participate in a merge. Thus, in the worst case, we can perform at least $m/8-q$
merges.
Thus, after a single iteration the number of clusters decreases to at most $m-(m/8-q) = 7/8m+q$.
\end{proof}

Our goal will be to prove the following theorem.

\begin{theorem}\label{thm:new_alg}
 Let $T$ be a tree on $n$ nodes labeled from an alphabet of size~$\sigma$. Then the size of the corresponding top DAG obtained by Algorithm~\ref{alg:opt} with $\alpha=10/9$ is $\Oh(\frac{n}{\log_\sigma n})$.
\end{theorem}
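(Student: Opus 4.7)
The plan is to fix a small constant $c>0$, set $K=c\log_\sigma n$ and $t^{*}=\lceil\log_\alpha K\rceil$, and split the nodes of $\TT$ by their creation iteration into ``lower'' ones (the leaves of $\TT$ together with internal nodes produced no later than iteration $t^{*}$) and ``upper'' ones (internal nodes produced after iteration $t^{*}$). I would bound the two groups separately.

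For the upper count I would invoke Lemma~\ref{le:shrinking}. At the beginning of iteration $t$ the clusters in $\TF$ partition the $n-1$ edges of $T$ and every cluster counted in $q_{t}$ has size exceeding $\alpha^{t}$, so $q_{t}\le n/\alpha^{t}$. Plugging this in gives $m_{t+1}\le \tfrac{7}{8}m_{t}+n/\alpha^{t}$. Setting $M_{t}=m_{t}\,\alpha^{t}$ transforms it into $M_{t+1}\le \tfrac{7\alpha}{8}M_{t}+\alpha n$; since $\alpha=10/9$ satisfies $\tfrac{7\alpha}{8}=35/36<1$, this recurrence is contractive and yields $m_{t}=\Oh(n/\alpha^{t})$. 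In particular $m_{t^{*}}=\Oh(n/K)=\Oh(n/\log_\sigma n)$. The upper portion of $\TT$ is then a binary tree that merges these $m_{t^{*}}$ surviving clusters into a single root, so it contains only $m_{t^{*}}-1=\Oh(n/\log_\sigma n)$ internal nodes and contributes at most that many distinct clusters to $\TD$.

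For the lower count, every such cluster was born at some iteration $t\le t^{*}$, at which point both of its children had size at most $\alpha^{t}\le \alpha K$, so its own size is at most $2\alpha K=\Oh(\log_\sigma n)$. The number of distinct subtrees of $\TT$ of this size is counted by picking an ordered binary tree shape on at most $s$ leaves (at most $C_{s-1}\le 4^{s}$ choices), a label pair in $\Sigma^{2}$ for each leaf ($(\sigma^{2})^{s}$ choices), and one of $5$ merge types for each internal node ($5^{s-1}$ choices), giving at most $(20\sigma^{2})^{s}$ possibilities. For $s=\Oh(\log_\sigma n)$ this is $n^{\Oh(c)}$, and choosing $c$ small enough (any $c<1/(2+\log_{2}20)$ works uniformly for all $\sigma\ge 2$) makes the exponent strictly below $1$, so the number of distinct lower clusters is $n^{1-\Omega(1)}=\Oh(n/\log_\sigma n)$.

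Summing the two contributions yields $|\TD|=\Oh(n/\log_\sigma n)$, as claimed. The main obstacle is arguing that Lemma~\ref{le:shrinking} still closes into a geometric recurrence despite the extra $q_{t}$ term; this works precisely when $\alpha<8/7$, which is the reason $\alpha=10/9$ appears in the statement of the theorem.
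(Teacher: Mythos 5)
Your proof is correct and follows essentially the same strategy as the paper's: establish that $\Oh(n/\alpha^t)$ clusters survive iteration $t$ using Lemma~\ref{le:shrinking} together with $q_t\le n/\alpha^t$, split $\TT$ into a ``small'' part bounded by counting labeled binary trees and an ``upper'' part bounded by the survivor count, and add. The one technical deviation is that you solve the recurrence $m_{t+1}\le\tfrac{7}{8}m_t+n/\alpha^t$ directly by the change of variable $M_t=m_t\alpha^t$ and the contraction $\tfrac{7\alpha}{8}<1$, whereas the paper packages this as Lemma~\ref{le:bound} and proves it by induction with a two-case split on whether $q\le p/100$; your version is arguably cleaner and makes it transparent why $\alpha<8/7$ is the operative constraint. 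One small bookkeeping slip: since a lower cluster can have size up to $2\alpha K$ rather than $K$, the admissible threshold should be $c<1/\bigl(2\alpha(2+\log_2 20)\bigr)$ instead of $c<1/(2+\log_2 20)$; this is off by the constant factor $2\alpha$ but does not affect the conclusion, as any sufficiently small constant $c$ suffices.
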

\noindent
In the following we assume that $\alpha=10/9$, but do not substitute it to avoid clutter.

\begin{lemma}\label{le:bound}
After the $t^\text{th}$ iteration of Algorithm~\ref{alg:opt} there are $\Oh(n/\alpha^{t+1})$ clusters in $\TF$.
\end{lemma}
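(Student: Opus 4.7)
The plan is to prove the bound by induction on $t$, combining Lemma~\ref{le:shrinking} with a simple volume argument that controls the number of oversized clusters.

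First I would observe that throughout the execution of the algorithm the current clusters form an edge-disjoint partition of the edges of $T$, so the total size of the clusters at the start of any iteration is exactly $n-1$. It follows that the number $q_t$ of clusters of size strictly larger than $\alpha^t$ at the start of iteration $t$ satisfies $q_t < n/\alpha^t$. Let $m_t$ denote the number of clusters in $\TF$ after iteration $t$, so $m_0 = n-1$. Plugging the bound on $q_t$ into Lemma~\ref{le:shrinking} gives the recurrence $m_t \le \tfrac{7}{8}\, m_{t-1} + n/\alpha^t$.

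I would then prove by induction that $m_t \le K\, n/\alpha^{t+1}$ for a suitable constant $K$. The base case holds as soon as $K \ge \alpha$. For the inductive step, the recurrence together with the hypothesis yields $m_t \le \bigl(\tfrac{7}{8} K + 1\bigr)\, n/\alpha^t$, and this is at most $K\, n/\alpha^{t+1}$ provided $\tfrac{7}{8} K + 1 \le K/\alpha$, equivalently $K \ge 1/(1/\alpha - \tfrac{7}{8})$. With $\alpha = 10/9$ one has $1/\alpha - 7/8 = 1/40$, so $K = 40$ closes the induction and gives the desired $\Oh(n/\alpha^{t+1})$ bound.

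The only subtlety is the need for $1/\alpha > 7/8$, i.e.\ $\alpha < 8/7$, which is precisely why the paper fixes $\alpha = 10/9$; any $\alpha$ in the open interval $(1, 8/7)$ would work at the cost of a larger hidden constant. Once this inequality is in hand, the remaining steps are a routine linear recurrence and present no further difficulty; in particular, I do not expect any combinatorial obstacle beyond what is already captured by Lemma~\ref{le:shrinking}.
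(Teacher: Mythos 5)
Your proof is correct and uses essentially the same ingredients as the paper's: Lemma~\ref{le:shrinking} applied at each step, the edge-disjointness volume bound $q_t < n/\alpha^t$, and induction on $t$. The paper's version instead splits the induction step into two cases ($q\le p/100$ versus $p<100q$) and ends up with constant $c=113$, whereas your uniform linear recurrence $m_t \le \tfrac78 m_{t-1} + n/\alpha^t$ is cleaner and yields the sharper constant $K=40$; your remark that any $\alpha\in(1,8/7)$ works is the same observation the paper makes implicitly by choosing $\alpha=10/9$.
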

\begin{proof}
We prove by induction on $t$ that after the $t^\text{th}$ iterations $\TF$ contains at most $cn/\alpha^{t+1}$ clusters,
where $c=113$.
The basis of the induction is immediate.
Consider the $t^\text{th}$ iteration. From the induction hypothesis, after the $(t-1)^\text{th}$ iteration there are at most $c n/\alpha^t$ clusters,
$p$ of them having size at most $\alpha^{t}$ (call them small) and $q$ of them having size larger than $\alpha^{t}$ that
cannot be yet merged in the $t^\text{th}$ iteration (call them big).
We know that $p\leq cn/\alpha^{t}$ and, as the big clusters are disjoint, $q\leq n/\alpha^t$.

We need to show that the total number of clusters after the $t^\text{th}$ iteration is at most $cn/\alpha^{t+1}$.
There are two cases to consider:
\begin{itemize}
 \item $q\le \frac{1}{100} p$: We apply Lemma~\ref{le:shrinking} and conclude that the total number of clusters
 after the $t^\text{th}$ iteration is at most $7/8(p+q)+q < 9/10 p \le cn/\alpha^{t+1}$.
 \item $p<100q$: In the worst case no pair of clusters was merged and the total number of clusters after the $t^\text{th}$ iteration is
 $p+q < 101q < 101 n/\alpha^t\le 113 n/\alpha^{t+1}= cn/\alpha^{t+1}$. \qedhere
\end{itemize}
\end{proof}

\begin{proof}[Proof of \text{Theorem~\ref{thm:new_alg}}]

Clusters are represented with binary trees labeled either with pairs of labels from the original alphabet or one of the $5$ labels representing the
type of merging, so in total there are $\sigma^2+5$ possible labels of nodes in $\TT$.
From the properties of Catalan numbers, it follows that the number of different binary trees of size $x$ is bounded by $4^x$.
Thus there are at most $\sum_{i=1}^x (4(\sigma^2+5))^i \le \sum_{i=1}^x (24\sigma^2)^i \le (24\sigma^2)^{x+1}$ distinct labeled trees of size at most $x$.
Even if some of them appear many times in $\TF$, they will be represented only once in $\TD$.

Consider the situation at the beginning of the $t^\text{th}$ iteration of the algorithm.
Then, from Lemma~\ref{le:bound} there are at most $\Oh(n/\alpha^t)$ clusters in $\TF$.
Setting $t$ such that $\alpha^t+1 = 3/4 \log_{24\sigma^2} n$ we obtain that up to this point at most $n^{3/4}$ distinct clusters of size at most $\alpha^t$ have been created.
As identical subtrees of $\TT$ are identified by the same node in the top DAG, all these clusters are represented by $n^{3/4}$ nodes in $\TD$.
Next, the remaining $\Oh(n/\alpha^t)$ clusters can introduce at most that many new nodes in the DAG.

Finally, size of the DAG obtained by the Algorithm~\ref{alg:opt} on a tree $T$ of size $n$ is bounded by $n^{3/4}+\Oh(n/\alpha^t)=\Oh(n/\log_{24\sigma^2} n)=\Oh(n/\log_\sigma n)$.
\end{proof}

\bibliography{biblio.bib}

\begin{thebibliography}{1}

\bibitem{BilleFG17}
Philip Bille, Finn Fernstr{\o}m, and Inge~Li G{\o}rtz.
\newblock Tight bounds for top tree compression.
\newblock In {\em {SPIRE}}, volume 10508 of {\em Lecture Notes in Computer
  Science}, pages 97--102. Springer, 2017.

\bibitem{TopTrees}
Philip Bille, Inge~Li G{\o}rtz, Gad~M. Landau, and Oren Weimann.
\newblock Tree compression with top trees.
\newblock {\em Inf. Comput.}, 243:166--177, 2015.

\bibitem{DowneyST80}
Peter~J. Downey, Ravi Sethi, and Robert~Endre Tarjan.
\newblock Variations on the common subexpression problem.
\newblock {\em J. {ACM}}, 27(4):758--771, 1980.

\bibitem{Hubschle-Schneider15}
Lorenz H{\"{u}}bschle{-}Schneider and Rajeev Raman.
\newblock Tree compression with top trees revisited.
\newblock In {\em {SEA}}, volume 9125 of {\em Lecture Notes in Computer
  Science}, pages 15--27. Springer, 2015.

\bibitem{LohreyRS17}
Markus Lohrey, Carl~Philipp Reh, and Kurt Sieber.
\newblock Optimal top dag compression.
\newblock {\em CoRR}, abs/1712.05822, 2017.
\newblock \href {http://arxiv.org/abs/1712.05822} {\path{arXiv:1712.05822}}.

\end{thebibliography}

\end{document}